\def\section{\@startsection {section}{1}{\z@}{-2.5ex plus -1ex minus
 -.2ex}{1.3ex plus .2ex}{\large\bf}}
\def\subsection{\@startsection{subsection}{2}{\z@}{-2.25ex plus%
 -1ex minus -.2ex}{0.5ex plus .2ex}{\bf}}
\newcommand{\R}{\mathbb{R}}
\newcommand{\C}{\mathbb{C}}
\newcommand{\Z}{\mathbb{Z}}
\theoremstyle{definition}
\def\d{\mathrm{d}}
\def\bepsilon{\boldsymbol{\epsilon}}
\def\bn{\boldsymbol{n}}
\def\be{\boldsymbol{e}}
\def\bF{\boldsymbol{F}}
\def\bA{\boldsymbol{A}}
\def\bee{\begin{equation}}
\def\eee{\end{equation}}
\numberwithin{equation}{section}
\newtheorem{theorem}{Theorem}[section]
\newtheorem{lemma}[theorem]{Lemma}
\begin{document}
\parskip 6pt
\parindent 0pt

\baselineskip 28pt

\begin{center}
{\Large \bf Magnetic Skyrmions at Critical Coupling}

\baselineskip 18pt

\vspace{0.4 cm}

{\bf Bruno Barton-Singer, Calum Ross and Bernd J~Schroers}\\
\vspace{0.2 cm}
Maxwell Institute for Mathematical Sciences and
Department of Mathematics,
\\Heriot-Watt University,
Edinburgh EH14 4AS, UK. \\
{\tt  bsb3@hw.ac.uk},  {\tt cdr1@hw.ac.uk}  and {\tt b.j.schroers@hw.ac.uk} \\

\vspace{0.4cm}

{ 27  November   2019} 
\end{center}

\begin{abstract}
\noindent We introduce a family of models for  magnetic skyrmions in the plane  for which infinitely many solutions can be given explicitly. 
The energy defining the models  is bounded below by a linear combination of degree and total vortex strength, and the configurations attaining the bound satisfy a first order Bogomol'nyi equation.  We give explicit solutions which depend on an  arbitrary holomorphic function. The simplest solutions are the  basic Bloch and N\'eel skyrmions, but we also exhibit distorted   skyrmions and anti-skyrmions as well as line defects and configurations consisting of skyrmions and anti-skyrmions.

\end{abstract}

\baselineskip 16pt
\parskip 4 pt
\parindent 10pt

\section{Introduction}
Magnetic skyrmions are topologically non-trivial configurations which occur in certain magnetic materials. It was first observed in \cite{BY} that particular examples of such  configurations are minimisers of  natural  energy expressions  for the magnetisation vector. They have since then become the subject of intense study, both experimentally and theoretically, not least because of their potential use as  information carriers in  magnetic  storage devices, see  \cite{NT} for a review.

In this paper  we use tools from gauge theory, complex analysis and differential geometry to introduce models for magnetic skyrmions which can be solved explicitly.  The models are grounded in the physics of magnetic skyrmions and 
belong to the general class   already considered in \cite{BY},  with an  energy expression consisting  of  a Dirichlet term, a Dzyaloshinskii-Moriya (DM) interaction energy \cite{Dzyaloshinskii,Moriya} and a potential  combining Zeeman  and  easy plane anisotropy terms.  However,  our formulation reveals  that, for   critical values of the coupling constants,  the models are of Bogomol'nyi type, which means that static solutions can be obtained by solving  a first order partial differential equation. Moreover, this equation can be solved in terms of an arbitrary holomorphic function.   Both the Bogomol'nyi property and the existence of an infinite family of explicit solutions are novel in the context of  magnetic skyrmions.

Models of Bogomol'nyi type have historically played an important role in the study of topological solitons \cite{MS}. They require a particular choice of coupling constants, but their mathematical properties allow for a far more detailed and explicit study of the solitons and their dynamics than would be possible in the generic case. Intricate  and surprising features of soliton dynamics such as shapes  and symmetries of multi-soliton configurations or scattering behaviour were first observed in models of Bogomol'nyi type and later found in generic soliton models.

The focus of this paper is the mathematical structure of the critically coupled models for magnetic skyrmions,  and we only begin  to explore the properties of our solutions.  However,  even at this stage it is clear that our infinite family of solutions   contains several of the configurations associated with the various phases of  generic models \cite{latticeground1,latticeground2,phasediagram}, and that it illustrates elliptical deformations  \cite{elliptical_first, elliptical_easy} and  the recently studied skyrmion bags \cite{FKATDS}  or `sacks' \cite{RK}.

The simplest  topological soliton  theory of Bogomol'nyi type is the  $O(3)$-sigma model in the plane \cite{BP}. The basic field is a map from the plane to the sphere, and  finiteness of the Dirichlet energy requires the field to tend to  a constant at spatial infinity, and  to extend to a map from sphere to sphere. Such maps have a topological and integer degree, which is physically interpreted as the soliton number. The energy is bounded below by a multiple of the absolute value of the degree, and this bound is attained by configurations which satisfy the  first order Bogomol'nyi equation. In this particular case, the Bogomol'nyi equation requires the configuration  to be a holomorphic or anti-holomorphic map to the Riemann sphere.

In analogy with the baby skyrme model \cite{PSZ},  one would expect the inclusion of DM interactions, Zeeman potential and anisotropy terms inevitably  to destroy the Bogomol'nyi property of the pure $O(3)$ sigma model. However, here we shall show that, with a careful choice of potential and for a one-parameter family of DM interaction terms, our models preserve the Bogomol'nyi property  and can be solved in terms of a fixed anti-holomorphic and an arbitrary holomorphic map to the Riemann sphere.  The fixed anti-holomorphic part turns out to be an analytical version of the usual Bloch  or  N\'eel magnetic skyrmions, but the holomorphic part is new.

Our family of models is introduced in Sect.~2, and the solutions are studied in Sect.~5. Readers primarily interested in the models and their solutions  are invited to skip directly  from Sect.~2 to Sect.~5. In the intervening sections we derive the Bogomol'nyi equation in two different ways. In Sect.~3, we write  the theory as a non-abelian gauge theory with a fixed non-abelian gauge field and apply a trick for constructing gauged sigma models of Bogomol'nyi type  introduced in \cite{Schroers}. In Sect.~4, we derive the Bogomol'nyi equation in complex stereographic coordinates and give the general  solution in terms of fixed anti-holomorphic and an arbitrary holomorphic function. We show that,  when that holomorphic function is rational,  the energy is generically  positive and quantised in multiples of $4\pi$.  We also point out that the energy is not well-defined when the leading holomorphic term at spatial infinity is linear, and propose a regularisation which preserves the generic formula.
We  study detailed properties of rational solutions in Sect.~5. The final Sect.~6 contains our conclusion and a brief discussion of open questions.

\section{The model}
\subsection{Energy and symmetry}

The basic field in any  mathematical model for magnetic skyrmions is the magnetisation vector, which in the planar and static case is a map  $\bn:\R^2 \rightarrow S^2$. Here we consider models where the energy of a configuration is measured by the sum of three terms: the Dirichlet or Heisenberg energy (quadratic in derivatives), a  generalised DM  interaction energy (linear in derivatives), and a potential term which may involve linear or quadratic terms in the Cartesian components of $\bn=(n_1,n_2,n_3)^t$ (with the constraint    $n_1^2+n_2^2+n_3^2=1$ always understood). General models of this sort were considered in   the seminal paper \cite{BY} in which the possibility  of topologically  stable configurations now known as magnetic skyrmions was first pointed out, and have been widely studied since then.

Our one-parameter  family of models belongs to the general family considered in \cite{BY} but requires a particular choice of coupling constants, which we call critical. The parameter in the family is an angle $\alpha \in [0,2\pi)$, and describes a generalised DM interaction term. In order to define this term, we need some additional notation. 

We use Cartesian coordinates $x_1$ and $x_2$ in the plane and write  $\partial_1$ and  $\partial_2$ for partial derivatives with respect to them.
Thinking of $\R^2$ as embedded in Euclidean $\R^3$ and using three-dimensional notation, we also write $\be_1, \be_2,\be_3$ for the canonical basis of $\R^3$, with $\be_3=\be_1\times \be_2$. 
In terms of the  rotation $ R(\alpha)$ about the 3-axis by $\alpha\in [0,2\pi)$,
we define 
\bee
\label{rotbasis}
\be_1^\alpha = R(\alpha) \be_1 = \begin{pmatrix} \cos\alpha \\ \sin\alpha \\ 0 
\end{pmatrix},
 \qquad \be_2^\alpha =  R(\alpha) \be_2 =  \begin{pmatrix} -\sin \alpha \\ \phantom{-} \cos\alpha \\ 0 
\end{pmatrix}, 
\qquad \be_3= \begin{pmatrix} 0 \\ 0 \\ 1
\end{pmatrix}.
\eee
Extending the usual definition of the gradient $\nabla = \sum_{i=1}^2 \be_i\partial_i$ to write 
\bee
\nabla_\alpha =\sum_{i=1}^2  \be_i^\alpha \partial_i, 
\eee
and defining 
\bee
\label{alphandef}
\bn^\alpha= R(\alpha) \bn,
\eee
 the family of  DM interaction terms we  are interested in  is 
\bee
\label{ourDM}
 \bn^\alpha \cdot \nabla \times \bn^\alpha = \bn \cdot \nabla_{-\alpha}  \times \bn.  
\eee
In components, it  consist of two familiar parts:
\bee
\bn \cdot \nabla_{-\alpha}  \times \bn= \cos\alpha \; w_B +\sin\alpha  \; w_N, 
\eee
where  $w_b$ and $w_N$ are the following  contractions of  the  chirality  tensor $\bn \times \partial_i \bn$:
\begin{align}
w_B& = \phantom{-} n_1\partial_2n_3 -n_2\partial_1 n_3 + n_3(\partial_1n_2 -\partial_2n_1),\nonumber \\ 
w_N& = - n_1\partial_1n_3 -n_2\partial_2 n_3 + n_3(\partial_1n_1 +\partial_2n_2).
\end{align}

Choosing energy units so that the coefficient of the Dirichlet energy term is unity,  the family of energy  functionals we want to consider is 
\bee
\label{alphaen1}
E[\bn]=\int_{\R^2} \frac{1}{2}(\nabla \bn)^2 + \kappa\bn \cdot \nabla_{-\alpha}  \times \bn+  \frac{\kappa^2}{2}(1-n_3)^2  \; \d x_1\d x_2.
\eee
The coupling constant $\kappa$ could  also be set to unity by a choice of length unit, but we find it convenient to keep it in our calculations.  We assume $\kappa >0$ for the remainder of this paper.
Since 
\bee
\label{critpot}
 \frac 12 (1-n_3)^2 = (1-n_3) - \frac 12 (1-n_3^2),
\eee
our potential term  combines a Zeeman and easy plane anisotropy  potential, with coefficients chosen   in such a way that the sum is a perfect square.

The energy expression \eqref{alphaen1} is invariant under translations in the plane and under the group $O(2)$ of rotations and reflection of   the plane, combined with simultaneous rotations and reflections  of the target sphere. On fields, the rotations  act as  
\bee
\label{rotact}
\bn(x_1,x_2) \mapsto R(\sigma) \bn(\cos\sigma \, x_1 - \sin \sigma\, x_2, \sin \sigma \,x_1 + \cos\sigma \, x_2),  \quad \sigma \in [0,2\pi),
\eee
 and  the  generator of reflections acts as 
 \bee
 \label{reflectact}
 \bn(x_1,x_2) \mapsto R(2\gamma) \bar\bn( x_1, -x_2),  \quad \text{with} \quad \bar \bn = \begin{pmatrix}
\phantom{-} n_1 \\-n_2\\ \phantom{-} n_3 \end{pmatrix}\quad \text{and}  \quad \gamma= \frac{\pi} {2} -\alpha.
 \eee
The symmetry group is smaller than that of generic baby skyrme models \cite{PSZ} because the DM interaction breaks the product of the  orthogonal groups in space and target space to a  diagonal subgroup. It is worth noting that it would be  mathematically natural to   consider an alternative   version of the DM interaction with the opposite chirality:
\bee
\label{reflectedDM}
\bar \bn \cdot \nabla_{-\alpha}  \times  \bar \bn.
\eee
If this term was used instead of the standard DM interaction, the symmetry would consist of spatial rotations as in \eqref{rotact} but   combined with rotations $R(-\sigma)$ of $\bn$.  Both the DM interaction  term \eqref{ourDM} and the  flipped version \eqref{reflectedDM} were considered  in \cite{genDMI}, together with a generic linear combination of the Zeeman and anisotropy potential $(1-n_3^2)$. We will work with the  DM interaction \eqref{ourDM}  and primarily consider  the critical linear combination \eqref{critpot}, but we  discuss the more  general potential in Sect.~2.2  and comment  on how our results would change if we had used the opposite chirality in the Conclusion.

At this point we should really  specify which  boundary conditions we impose on the field $\bn$ at spatial infinity.  It is  {\em a priori} not clear if one should, as in the discussion of the $O(3)$ sigma model, consider only configurations which extend to continuous maps $S^2\rightarrow S^2$. If they did, then 
\bee
\label{Qdef}
Q[\bn]=\frac{1}{4\pi}\int_{\R^2} \bn \cdot \partial_1\bn \times \partial_2\bn \; \d x_1 \d x_2
\eee
would automatically be an integer, giving the degree of the extended map.

As we shall see, we  should in fact  allow for configurations which  do not have a continuous extension $S^2\rightarrow S^2$. Such maps do not have a topological degree, but the integral  expression for $Q$ still plays an important role.  We will refer to it as the degree throughout this paper. 

In our model, the degree occurs invariably  in conjunction with a term which also depends on the boundary behaviour, namely the total
 vortex strength
\bee
\label{Omdef}
\Omega[\bn]= \frac{1}{4\pi}\int_{\R^2} \omega \; \d x_1 \d x_2,
\eee
where the integrand is the vorticity of the  first two components of  $\bn^\alpha$:
\bee
\label{vorticity}
\omega = \kappa( \partial_1n_2^\alpha - \partial_2n_1^\alpha). 
\eee

The  expressions we have given for the total energy, the degree and the total vortex strength should all be interpreted as functionals on the space of magnetisation fields. For some magnetisation fields $\bn$, the relevant integrals   may not be well-defined. One might  therefore want to restrict the following discussion to a class of configurations which  have a well-defined total energy, total vortex strength and degree. However, as we shall see, it is impossible to do this without discarding some of the most interesting configurations which arise as solutions in our model. In order to  keep the discussion general but also  mathematically rigorous, we therefore  need notation for the  restrictions of the integrals \eqref{alphaen1},\eqref{Qdef} and \eqref{Omdef} to  compact subsets $D\subset \R^2$. We define
\begin{align}
\label{EQOmD}
E_D[\bn]&=\int_{D } \frac{1}{2}(\nabla \bn)^2 + \kappa\bn \cdot \nabla_{-\alpha}  \times \bn+  \frac{\kappa^2}{2}(1-n_3)^2  \; \d x_1\d x_2, \nonumber \\
Q_D[\bn]&=\frac{1}{4\pi}\int_{D} \bn \cdot \partial_1\bn \times \partial_2\bn \; \d x_1 \d x_2, \nonumber \\
\Omega_D[\bn]&= \frac{1}{4\pi}\int_D \omega \; \d x_1 \d x_2, \qquad \qquad \qquad D\subset \R^2.
\end{align}

Clearly, $\omega \, \d x_1\wedge \d x_2 =\d \Theta$, where
\bee
\label{Thetadef}
\Theta =\kappa(n_1^\alpha \d x_1+ n_2^\alpha \d x_2)
\eee
is a differential one-form which plays a central role in this paper. It then follows that 
\bee
\label{OmdefD}
\Omega_D[\bn]= \frac{1}{4\pi}\int_{\partial D} \Theta.
\eee
In particular, one can take $D$ to be a disk of radius $R$ and centred at the origin, so that $\partial D= C^R$ is the circle of radius $R$.  For  some of the configurations we consider,  the limit 
\bee
\label{omthetareg}
\Omega^\circ[\bn] = \frac{1}{4\pi}\lim_{R\rightarrow \infty} \int_{C^R} \Theta
\eee
exists  even when the  integral defining $\Omega[\bn]$ does not. We will treat $\Omega^\circ[\bn]$ as a regularised total vortex strength in those  cases. When $\Omega$ is well-defined  it necessarily coincides with $\Omega^\circ$. 

For the solutions we construct in this paper, the total  vortex strength is generically well-defined and finite, and, as a consequence, the total energy turns out to be  given by a simple formula. The regularisation \eqref{omthetareg} is  such that the resulting regularised  energy for the non-generic solutions naturally fits into this general formula. However, we should stress that the regularisation procedure is not  essential  for our main results.\footnote{See the  Note added at the end of this paper for a  brief discussion of an alternative energy expression which differs  from the  one defined in \eqref{EQOmD} by a boundary term and leads to the same variational equations.}.

Postponing a  more detailed discussion of allowed configurations and topological invariants to Sect.~4 and the Conclusion, we now derive the variational equation for \eqref{alphaen1}, only assuming that $\bn$ is twice differentiable. 
By considering the variation 
$\delta \bn = \bepsilon \times \bn$ 
for an infinitesimal vector function $\bepsilon$ which vanishes rapidly at spatial infinity, we obtain 
\bee
\label{eomm}
2\kappa (\bn \cdot \nabla_{-\alpha}) \bn = \left( \Delta \bn + \kappa^2 (1-n_3)\be_3\right) \times \bn.
\eee 
We will show that this equation is  in fact implied by  a first order equation. 

\subsection{Hedgehog fields}
In the skyrmion literature, the magnetisation $\bn$ is often described in terms of spherical polar coordinates, defined via
\bee
\label{nangle}
\bn=\begin{pmatrix}
\sin \theta \cos \phi \\
\sin\theta \sin \phi \\ 
\cos \theta
\end{pmatrix},
\eee
where $\theta$ and $\phi$ are functions on the plane. This parametrisation is particularly useful when considering hedgehog fields. 
 By definition, and using polar coordinates  $(r,\varphi)$  in the plane,   hedgehog fields have a  profile $\theta$ which depends on $r$ only and a longitudinal angle $\phi$ which is related to $\varphi$  according to
 \bee
 \label{hedgeangle}
 \phi = \varphi + \gamma,
 \eee
for a constant angle $\gamma$.   Such fields are invariant under the rotational  symmetry \eqref{rotact}. They are additionally invariant under the reflection symmetry \eqref{reflectact} if and only if we choose $\gamma$ to be the complementary angle of $\alpha$ as in \eqref{reflectact},  and we now make this choice.  
With the boundary condition 
\bee
\label{boundary}
\theta(0)=\pi, \qquad \theta(\infty) = 0,
\eee
one checks that hedgehog fields have degree $Q=-1$. 

Before developing the general machinery for generating solutions of the equation \eqref{eomm}, we note some properties of the much simpler hedgehog solutions in our model. We do this in a slightly more general family of models, obtained from 
 \eqref{alphaen1}  by replacing 
\bee
\label{potmu}
\frac{\kappa^2}{2}(1-n_3)^2  \rightarrow \frac{\mu^2}{2}(1-n_3)^2
\eee
for a further real constant $\mu$. Minimisers of the resulting energy functional were studied in  \cite{DM}, and those of degree $Q=-1$  were shown to have holomorphicity properties similar to the ones which we will  demonstrate more generally for stationary points of \eqref{alphaen1}. We will exhibit these  properties in   our discussion of example solutions in Sect.~5. 
Here we  derive the profile of  hedgehog solutions by a more pedestrian method. 

For hedgehog fields and $\gamma= \frac \pi 2-\alpha$, the energy expression with the replacement \eqref{potmu} is 
\bee
E=2\pi \int_0^\infty rdr \left(\frac 12 \left(\frac{d \theta}{dr}\right)^2 + \frac{\sin^2\theta }{2 r^2} +\kappa   \left(\frac{d\theta}{dr} +\frac{\sin(2\theta)}{2r}\right) +\frac{\mu^2}{2}(1-\cos\theta )^2\right).
\eee
The Euler-Lagrange equation is
\bee
\label{newEL}
\frac{d^2\theta}{dr^2}  =-  \frac 1 r \frac{d\theta}{dr} + 
 \frac{\sin(2\theta)}{2r^2} - 2 \kappa \frac{\sin^2 \theta}{r}  +\mu^2 \sin\theta(1-\cos\theta). 
\eee
With the boundary condition \eqref{boundary}, this  is  solved by
 \bee
\label{hedgehoggen}
\theta = 2\tan^{-1}\left(\frac{2\kappa }{\mu^2 r}\right).
\eee 
As already advertised, we will recover this profile from the simplest solution of a first order Bogomol'nyi equation  
in the case $\mu=\kappa$ in equation \eqref{hedgehog} .

\section{A Bogomol'nyi equation for magnetic skyrmions}

We will now show that the energy functional \eqref{alphaen1}  can be written as the  sum of a squared expression and a linear combination  of  the integral expression for the  degree \eqref{Qdef} and the total vortex strength \eqref{Omdef}. The vanishing of the squared expression gives  a first order Bogomol'nyi equation which implies the variational second order equation \eqref{eomm}.

Our derivation of the Bogomol'nyi equation is inspired by a similar treatment of gauged sigma models in \cite{Schroers} and \cite{Nardelli}. As noticed in \cite{Melcher}, the combination 
\bee
\partial_i\bn -\kappa \be_i\times \bn,  \qquad i=1,2,
\eee
which occurs in many calculations involving magnetic skyrmions and which is often called `helical derivative' can be thought of as a  covariant derivative with respect to a  non-abelian gauge field. To see the benefits of this, we take a more general viewpoint and consider more general  $su(2)$  gauge fields. 

To minimise notation, we identify  the $su(2)$ Lie algebra with $\R^3$ and  the Lie algebra commutator with the vector product. 
Defining the covariant derivative of $\bn$ as 
\bee
 D_i\bn = \partial_i \bn + \bA_i \times \bn,
\eee
and the non-abelian field strength
\bee
\bF_{ij}=\partial_i \bA_j - \partial_i \bA_j   
 +  \bA_i\times \bA_j, \qquad i,j=1,2,
\eee 
we note 
\bee
\label{trivga}
(D_1\bn + \bn\times D_2\bn)^2  =  (D_1\bn)^2 + (D_2\bn)^2  - 2D_1\bn\times D_2\bn \cdot \bn. 
\eee
and also, as already observed by 't Hooft \cite{tHooft},
\bee
\label{trickga} 
\bn \cdot D_1\bn \times D_2\bn -  \bn \cdot \bF_{12} = \bn \cdot \partial_1\bn\times \partial_2\bn + \partial_2(\bn\cdot \bA_1) - \partial_1(\bn\cdot \bA_2).
\eee
This equation  shows that the particular combination of the degree density (the integrand of \eqref{Qdef})  with a  two-dimensional curl on the right hand side can be expressed in a manifestly gauge invariant way.

We can now state and prove the main result in this section. 
\begin{lemma}
The energy  for magnetic skyrmions at critical coupling associated with a compact subset $D\subset \R^2$  can be written as 
\bee
\label{EQTheta}
E_D[\bn] = 4\pi (Q_D[\bn] + \Omega_D[\bn])+ \int_{D} (D_1\bn +\bn \times D_2\bn)^2 \; \d x_1\d x_2,
\eee
where we used the covariant derivative
\bee
D_i\bn = \partial_i\bn -\kappa \be_i^{-\alpha}\times \bn, \quad i=1,2,
\eee
defined in terms of \eqref{rotbasis}. In particular, the equality 
\bee
\label{EQThetaequal}
E_D[\bn] = 4\pi (Q_D[\bn] +\Omega_D[\bn])
\eee
 holds  for all compact subsets $D\subset \R$  iff the Bogomol'nyi equation
\bee
\label{alphabogon}
D_1\bn =-\bn\times D_2\bn
\eee
is satisfied. This equation implies the variational equation \eqref{eomm}.
\end{lemma}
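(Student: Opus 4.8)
The plan is to prove \eqref{EQTheta} as a pointwise identity between densities and then deduce the remaining two assertions from it. Since $4\pi Q_D$ and $4\pi\Omega_D$ are integrals of the local densities $\bn\cdot\partial_1\bn\times\partial_2\bn$ and $\omega$, it is enough to show that the integrand of $E_D$ equals the sum of these two densities and $(D_1\bn+\bn\times D_2\bn)^2$; no boundary subtleties enter at this stage, and the restriction to compact $D$ only serves to keep every integral finite. First I would expand the Bogomol'nyi square using the purely algebraic identity \eqref{trivga}, writing $(D_1\bn+\bn\times D_2\bn)^2=(D_1\bn)^2+(D_2\bn)^2-2\,\bn\cdot(D_1\bn\times D_2\bn)$. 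This cleanly separates the computation into a ``diagonal'' part and a ``topological'' part.

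For the diagonal part I would insert $D_i\bn=\partial_i\bn-\kappa\,\be_i^{-\alpha}\times\bn$ and expand, using $|\bn|=1$ and $\bn\cdot\partial_i\bn=0$. The pure-derivative terms reproduce the Dirichlet density, the cross terms assemble into the DM density $\bn\cdot\nabla_{-\alpha}\times\bn$ through $\nabla_{-\alpha}=\sum_i\be_i^{-\alpha}\partial_i$, and the terms quadratic in the frame vectors give $\kappa^2\sum_i(\be_i^{-\alpha}\times\bn)^2=\kappa^2(1+n_3^2)$, where I use that $\{\be_1^{-\alpha},\be_2^{-\alpha},\be_3\}$ is orthonormal so that $(\be_1^{-\alpha}\cdot\bn)^2+(\be_2^{-\alpha}\cdot\bn)^2=1-n_3^2$. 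For the topological part I would apply the 't Hooft identity \eqref{trickga} with the gauge field $\bA_i=-\kappa\,\be_i^{-\alpha}$. The key simplification is that this $\bA_i$ is constant, so the field strength collapses to $\bF_{12}=\bA_1\times\bA_2=\kappa^2\be_3$, giving $\bn\cdot\bF_{12}=\kappa^2 n_3$; moreover, since $\be_i^{-\alpha}\cdot\bn=n_i^\alpha$, the abelian curl appearing in \eqref{trickga} is precisely $\partial_2(\bn\cdot\bA_1)-\partial_1(\bn\cdot\bA_2)=\kappa(\partial_1 n_2^\alpha-\partial_2 n_1^\alpha)=\omega$. Collecting the $n_3$-dependent pieces, the field-strength contribution combines with $\kappa^2(1+n_3^2)$ to build the perfect-square potential $\propto(1-n_3)^2$ of \eqref{critpot} --- this is exactly where the critical choice of coupling is used --- while the degree density and $\omega$ survive with the relative weight that integrates to $4\pi(Q_D+\Omega_D)$, establishing \eqref{EQTheta}.

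The two remaining claims then follow quickly. Because $(D_1\bn+\bn\times D_2\bn)^2$ is a pointwise non-negative and continuous integrand, its integral over every compact $D$ vanishes if and only if it vanishes identically, i.e. if and only if the Bogomol'nyi equation \eqref{alphabogon} holds; comparing with \eqref{EQTheta} this is exactly the content of \eqref{EQThetaequal}. For the implication to the second-order equation \eqref{eomm} I would argue variationally rather than substitute directly. Take a variation $\delta\bn=\bepsilon\times\bn$ with $\bepsilon$ smooth and supported in the interior of $D$. Then $\delta Q_D$ is the integral of a total derivative, and $\Omega_D$ depends only on the boundary values of $\bn$ through $\int_{\partial D}\Theta$, so both are unchanged; hence $\delta E_D=\delta\!\int_D(D_1\bn+\bn\times D_2\bn)^2$. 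At a Bogomol'nyi solution the integrand vanishes identically, so this first variation is zero for every admissible $\bepsilon$, and by the fundamental lemma of the calculus of variations --- applied exactly as in the derivation of \eqref{eomm} --- the field satisfies \eqref{eomm}.

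The step I expect to be the main obstacle is the bookkeeping in the topological part: verifying that the constant field strength contributes $\kappa^2 n_3$ with the correct sign and that the abelian curl collapses exactly to the vorticity $\omega$ of \eqref{vorticity}. These are the two facts that force the Zeeman-plus-anisotropy terms into the perfect square \eqref{critpot} and pin down the relative coefficient of $Q_D$ and $\Omega_D$; once the orthonormality of the rotated frame and the constancy of $\bA_i$ are exploited, the rest is routine vector algebra.
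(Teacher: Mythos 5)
Your proof is correct, and its core --- the choice of constant gauge field $\bA_i=-\kappa\be_i^{-\alpha}$, the use of the algebraic identity \eqref{trivga} together with the 't~Hooft identity \eqref{trickga}, and the observation that $\bn\cdot\bF_{12}=\kappa^2 n_3$ combines with the $\tfrac{\kappa^2}{2}(1+n_3^2)$ produced by the diagonal part to build the critical potential $\tfrac{\kappa^2}{2}(1-n_3)^2$ --- is exactly the paper's argument, down to the identification of the abelian curl $\partial_2(\bn\cdot\bA_1)-\partial_1(\bn\cdot\bA_2)$ with the vorticity $\omega$. Where you genuinely diverge is in the final claim that the Bogomol'nyi equation implies the second-order equation \eqref{eomm}. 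The paper proves this by direct substitution: it spells out $\partial_1\bn$ and $\partial_2\bn$ from \eqref{alphabogon}, computes $\Delta\bn$, takes the cross product with $\bn$ and uses the Bogomol'nyi equation a second time to land on \eqref{eomm}. You instead argue variationally: for $\bepsilon$ supported in the interior of $D$ the degree density varies by a total derivative and $\Omega_D$ is a pure boundary integral, so $\delta E_D$ reduces to the first variation of the non-negative squared term, which vanishes at its pointwise zero; hence a Bogomol'nyi solution is a critical point of $E_D$ and satisfies \eqref{eomm}. This is a clean and shorter ``minimisers are critical points'' argument; its only cost is that it leans on the (true, but worth stating explicitly) fact that the variation of the pullback of the area form is exact, which the paper's pointwise computation never needs. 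Either route is acceptable. Note only that your density identity, like the one in the paper's own proof, actually yields the coefficient $\tfrac12$ in front of $(D_1\bn+\bn\times D_2\bn)^2$, whereas the lemma as displayed in \eqref{EQTheta} omits it --- a discrepancy internal to the paper rather than a flaw in your argument.
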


\begin{proof} 
Consider the gauge field given by the  constant, Lie algebra-valued one-form with Cartesian components  
\bee
 \bA_i= -\kappa \be_i^{-\alpha}, \qquad i=1,2.  
\eee
Then
\bee
 \bF_{12}= \kappa^2 \be_3, \quad \bn\cdot \bA_i =-\kappa \bn_i^{\alpha},  \quad i=1,2,
\eee
and therefore combining the results \eqref{trivga} and \eqref{trickga} for this gauge field gives
\bee
(D_1\bn + \bn\times D_2\bn)^2  =  (D_1\bn)^2 + (D_2\bn)^2 - 2\left( \bn\cdot \partial_1\bn\times \partial_2\bn + \kappa  (\partial_1 n_2^\alpha-\partial_2n_1^\alpha) + \kappa^2n_3\right).
\eee
One also checks that 
\bee
\frac{1}{2}(D_1\bn^2 +D_2\bn^2)=
\frac{1}{2}(\nabla \bn)^2 + \kappa \bn\cdot \nabla_{-\alpha} \times \bn  
 +\frac 12 \kappa^2(1+n_3^2),
\eee
and so the energy density of \eqref{alphaen1} can be written as 
\begin{align}
\frac{1}{2}(\nabla \bn)^2 +& \kappa \bn \cdot \nabla_{-\alpha} \times \bn+ \frac{\kappa^2}{2}(1-n_3)^2  \nonumber \\
&=\frac 12 (D_1\bn + \bn\times D_2\bn)^2 +\bn\cdot \partial_1\bn\times \partial_2\bn +\kappa( \partial_1n_2^\alpha-\partial_2n_1^\alpha).
\end{align} 
Integrating and  using the definitions \eqref{EQOmD}, we deduce that  the  energy  associated with a compact subset $D\subset R^2$ 
can be written as claimed in \eqref{EQTheta}. The equality \eqref{EQThetaequal} holds  for all $D\subset \R^2$ iff 
\bee
\label{bothbogo}
D_1\bn =- \bn\times D_2\bn \Leftrightarrow D_2\bn = \bn\times D_1\bn,
\eee
where the equivalence follows by applying $\bn\times$.  

Showing that the equation \eqref{bothbogo}  implies the variational equation is a lengthy but standard calculation.  We indicate the main steps.
Spelling out the Bogomol'nyi equation, we have  
\begin{align}
    \partial_1\bn & = -\bn \times \partial_2 \bn + \kappa(\be_1^{-\alpha}\times \bn + \bn \times (\be^{-\alpha}_2\times \bn)), \nonumber \\
    \partial_2\bn & = \bn \times \partial_1 \bn + \kappa(\be^{-\alpha}_2\times \bn - \bn \times (\be_1^{-\alpha}\times \bn)).
\end{align}
Therefore 
\begin{align}
&\partial_1^2 \bn + \partial_2^2 \bn  = 2 \partial_2\bn\times \partial_1\bn 
+\kappa(\be_1^{-\alpha}\times \partial_1\bn + \be_2^{-\alpha}\times \partial_2\bn) \nonumber  \\
    &+  \kappa (\partial_1\bn \times (\be_2^{-\alpha}\times \bn)+  \bn \times (\be_2^{-\alpha}\times \partial_1 \bn)- \partial_2\bn \times (\be_1^{-\alpha}\times \bn))- \bn \times (\be_1^{-\alpha} \times \partial_2\bn)).
\end{align}
Taking a cross product with $\bn$ and noting
\bee 
\bn\times (\be_i^{-\alpha}\times \partial_j\bn) = -n_i^\alpha\partial_j\bn,
\eee
we arrive at 
\bee
\bn \times \Delta \bn = -\kappa(n_1^\alpha\partial_1 + n_2^\alpha\partial_2) \bn + 
\kappa( n_1^\alpha \bn\times \partial_2\bn - n_2^\alpha\bn\times \partial_1\bn).
\eee
Now we use the Bogomol'nyi equation again in the last term
to conclude that 
\bee
\kappa( n_1^\alpha \bn\times \partial_2\bn - n_2^\alpha\bn\times \partial_1\bn)
= -\kappa(n_1^\alpha\partial_1 + n_2^\alpha \partial_2) \bn +\kappa^2(1-n_3)\be_3\times \bn.  
\eee
Therefore  
\bee
\bn \times \Delta \bn = -2\kappa(n_1^\alpha\partial_1 + n_2^\alpha \partial_2) \bn + \kappa^2(1-n_3)\be_3\times \bn,
\eee
which is  the equation \eqref{eomm} obtained by variation. 
\end{proof}

As often in the $O(3)$ sigma model or its gauged versions, the  Bogomol'nyi equations  are best studied in complex, stereographic coordinates. We do this in the next section.

\section{Magnetic skyrmions in  complex coordinates }
\subsection{The Bogomol'nyi equation  in stereographic  coordinates}
We use stereographic coordinates on the sphere defined by projection from the south pole. With the abbreviation  
\bee
\nu = n_1+in_2,
\eee
our stereographic coordinate is 
\bee
w= \frac{\nu }{1+n_3}.
\eee
When the magnetisation tends to the minimum of the potential term,  $\bn \rightarrow (0,0,1)^t$, then $w\rightarrow 0$. This makes  $w$ a natural choice of coordinate,  but in describing our solutions we also need  
\bee
\label{vdef}
v =\frac{1}{w}. 
\eee
For later use we also note the inverse relation
\bee
\label{invstereo}
\nu = \frac{2w }{1+|w|^2} ,\quad n_3=  \frac{1-|w|^2 }{1+|w|^2}.
\eee
We introduce the complex coordinate $z=x_1+ix_2$ in the plane, and use the standard holomorphic and anti-holomorphic derivatives
\bee
\partial_z= \frac{1}{2} (\partial_1-i\partial_2), \qquad 
\partial_{\bar z} = \frac{1}{2} (\partial_1+i\partial_2). 
\eee

Observing that, with the  notation \eqref{alphandef}, $ e^{i\alpha} \nu  = n_1^\alpha + in_2^\alpha$, 
the DM interaction term \eqref{ourDM} can be written in stereographic coordinates as 
\bee
\label{DMIcomplex}
\kappa \bn \cdot \nabla_{-\alpha}  \times \bn  = 2\kappa\text{Im}(e^{i\alpha}( n_3\partial_z \nu - \nu\partial_z n_3))= 4\kappa \text{Im} \left ( e^{i\alpha} \frac{\partial_z w + w^2 \partial_z \bar{w}} {(1+|w|^2)^2}\right).
\eee
The other terms in the energy functional have standard expressions in stereographic coordinates, and 
so  the energy \eqref{alphaen1} is 
\bee
\label{alphaen2}
E[w]= \int_{\R^2} \frac{2|\nabla w|^2 + 4 \kappa \text{Im} (e^{i\alpha}(\partial_z w + w^2 \partial_z \bar{w})) + 2\kappa^2| w|^4}{(1+|w|^2)^2}\; \d x_1 \d x_2.
\eee

The integral \eqref{Qdef} defining the degree   is  
\bee
\label{Qcomplex}
Q[w]= \frac{ i}{2\pi} \int_{\R^2} \frac{\partial_1w \partial_2\bar{w} - \partial_2 w\partial_1\bar{w}}{(1+|w|^2)^2} \; \d x_1 dx_2,
\eee
while the vorticity \eqref{vorticity} is 
\bee
\label{omegacomplex}
\omega =  2\kappa \text{Im}(e^{i\alpha} \partial_z \nu) = 4\kappa \text{Im}\left(e^{i\alpha} \frac{\partial_z w-w^2 \partial_z \bar{w}}{(1+|w|^2)^2}\right).
\eee
The one-form $\Theta$ can  be written as 
\bee
\label{Thetaalpha}
\Theta= \kappa  \text{Re} (e^{-i\alpha} \bar{\nu}  \d z) =\kappa \frac{2\text{Re} (e^{-i\alpha} \bar{w} \d z)}{1+|w|^2} =\kappa \frac{2\text{Re} (e^{-i\alpha} v \d z)}{1+|v|^2}.
\eee
In the following we write $E_D[w], Q_D[w]$ and $\Omega_D[w]$ for the integrals \eqref{EQOmD}  over $D\subset \R^2$  with the integrands expressed in terms of the complex field $w$. 
We can now state the main result of this paper.

\begin{theorem}
The energy \eqref{EQOmD} associated with a compact subset $D\subset \R^2$  can be written as 
\bee
\label{complexsquare}
E_D [w] = 4\pi (Q_D[w] +\Omega_D[w]) + \int_{D}8 \frac{( \partial_{\bar z} w  - \frac i 2    \kappa e^{i\alpha}  w^2 )(\partial_z \bar{w } + \frac i 2    \kappa e^{-i\alpha} \bar{w}^2 )}{(1+|w|^2)^2} \, \d x_1 \d x_2.
\eee
The equality 
\bee
\label{complexequality}
E_D[w] = 4\pi (Q_D[w] +\Omega_D[w]) 
\eee
holds for all  compact $D\subset \R^2$  iff the  field $v$ defined in \eqref{vdef} satisfies  the Bogomol'nyi equation 
\bee
\label{holobog}
 \partial_{\bar z} v = -\frac{i}{2}  \kappa      e^{i\alpha}.
 \eee
The general solution is 
\bee
\label{msol}
v = -\frac{i}{2}  \kappa      e^{i\alpha }\bar{z} + f(z),
\eee
where $f$ is an arbitrary holomorphic map from the plane to the Riemann sphere.
\end{theorem}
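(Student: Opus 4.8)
The plan is to prove the integrand identity \eqref{complexsquare} by a direct expansion in the stereographic coordinate $w$, using the complex expressions \eqref{alphaen2}, \eqref{Qcomplex} and \eqref{omegacomplex} that are already in hand, and then to read off and solve the Bogomol'nyi equation. One could instead invoke the Lemma, whose topological decomposition \eqref{EQTheta} reduces matters to translating $(D_1\bn+\bn\times D_2\bn)^2$ into stereographic coordinates; but the direct route uses only formulas already derived and makes the non-negativity of the extra term manifest.

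First I would expand the numerator on the right of \eqref{complexsquare}. Setting $P=\partial_{\bar z}w-\frac{i}{2}\kappa e^{i\alpha}w^2$, one checks that $\partial_z\bar w+\frac{i}{2}\kappa e^{-i\alpha}\bar w^2=\bar P$, so the extra term is $8|P|^2/(1+|w|^2)^2\ge 0$. Expanding gives
\[ 8|P|^2=8|\partial_{\bar z}w|^2-8\kappa\,\text{Im}\!\left(e^{-i\alpha}\bar w^2\partial_{\bar z}w\right)+2\kappa^2|w|^4 . \]
The standard splitting $8|\partial_{\bar z}w|^2=2|\nabla w|^2-2i(\partial_1 w\,\partial_2\bar w-\partial_2 w\,\partial_1\bar w)$ turns the first term into the Dirichlet density of \eqref{alphaen2} plus exactly minus the degree density of \eqref{Qcomplex}, so that adding $4\pi Q_D$ absorbs the antisymmetric piece; the term $2\kappa^2|w|^4$ is already the potential density of \eqref{alphaen2}.

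The crux is the cross term. I would show that the DM density of \eqref{alphaen2} equals the sum of the cross term $-8\kappa\,\text{Im}(e^{-i\alpha}\bar w^2\partial_{\bar z}w)$ and the vorticity density $\omega$ of \eqref{omegacomplex}, so that the leftover $\omega$ integrates to $4\pi\Omega_D$. The key is the single conjugation identity $e^{-i\alpha}\bar w^2\partial_{\bar z}w=\overline{e^{i\alpha}w^2\partial_z\bar w}$, which gives $\text{Im}(e^{-i\alpha}\bar w^2\partial_{\bar z}w)=-\text{Im}(e^{i\alpha}w^2\partial_z\bar w)$; this flips the sign of the $w^2\partial_z\bar w$ contribution and converts the $+w^2\partial_z\bar w$ of \eqref{DMIcomplex} into the $-w^2\partial_z\bar w$ combination of \eqref{omegacomplex}. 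Assembling the four pieces yields \eqref{complexsquare}.

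For the equivalence and the solution: since $8|P|^2/(1+|w|^2)^2$ is continuous and non-negative, its integral over every compact $D$ vanishes iff $P\equiv 0$, i.e. iff $\partial_{\bar z}w=\frac{i}{2}\kappa e^{i\alpha}w^2$. This Riccati-type equation I would linearise by the substitution $v=1/w$ of \eqref{vdef}: from $\partial_{\bar z}v=-\partial_{\bar z}w/w^2$ it becomes the constant-coefficient equation \eqref{holobog}, whose general solution is the particular solution $-\frac{i}{2}\kappa e^{i\alpha}\bar z$ plus an arbitrary element of $\ker\partial_{\bar z}$, that is an arbitrary holomorphic $f(z)$, giving \eqref{msol}. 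The main obstacle is the bookkeeping in the cross term, namely arranging the DM contribution and the stray $\text{Im}(e^{-i\alpha}\bar w^2\partial_{\bar z}w)$ term to combine into the vorticity with no leftover holomorphic-derivative remainder; everything hinges on the conjugation identity above, and once it is recognised the rest is routine. A secondary point is to treat $v=1/w$ at the zeros of $w$, the vacuum points, by regarding $f$ as a Riemann-sphere valued map so that such points appear as poles of $f$.
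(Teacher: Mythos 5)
Your proposal is correct and follows essentially the same route as the paper: the paper completes the square in the energy density of \eqref{alphaen2} using the identity $\partial_z\bar w\,\partial_{\bar z}w=\tfrac14\bigl(|\nabla w|^2-i(\partial_1w\,\partial_2\bar w-\partial_2w\,\partial_1\bar w)\bigr)$ and the splitting of the DM density into the vorticity $\omega$ plus $8\kappa\,\mathrm{Im}(e^{i\alpha}w^2\partial_z\bar w)$, which is exactly your expansion of $8|P|^2$ run in reverse, and it then linearises the resulting Riccati-type equation by the same substitution $v=1/w$. The only cosmetic difference is that you make the non-negativity of the extra term and the conjugation identity for the cross term explicit, which the paper leaves implicit.
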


\begin{proof}
Using the standard identity 
\bee
\partial_z \bar{ w} \partial_{\bar z} w = \frac 1 4 \left(|\partial_1w|^2 + |\partial_2 w|^2 - i (\partial_1w \partial_2\bar{w} - \partial_2 w\partial_1\bar{w})\right).
\eee
and the  expression for the  vorticity \eqref{omegacomplex} in complex coordinates, we have the following identities for the energy density
\begin{align*}
&\frac{2|\nabla w|^2 + 4 \kappa \text{Im} (e^{i\alpha}(\partial_z w + w^2 \partial_z \bar{w})) + 2\kappa^2| w|^4}{(1+|w|^2)^2} \\
&= \omega + \frac{2|\nabla w|^2 + 8 \kappa \text{Im} (e^{i\alpha} w^2 \partial_z \bar{w}) + 2\kappa^2| w|^4}{(1+|w|^2)^2} \\
&= \omega  + 2i \frac{\partial_1w \partial_2\bar{w} - \partial_2 w\partial_1\bar{w}}{(1+|w|^2)^2}
 +\frac{8 \partial_z \bar{ w} \bar{\partial} w+ 8 \kappa \text{Im} (e^{i\alpha} w^2 \partial_z \bar{w}) + 2\kappa^2| w|^4}{(1+|w|^2)^2} \\
&=\omega  + 2i \frac{\partial_1w \partial_2\bar{w} - \partial_2 w\partial_1\bar{w}}{(1+|w|^2)^2}+  8 \frac{( \partial_{\bar z} w  - \frac i 2    \kappa e^{i\alpha}  w^2 )(\partial_z \bar{w } + \frac i 2    \kappa e^{-i\alpha} \bar{w}^2 )}{(1+|w|^2)^2} .
\end{align*}
Integrating and using the expression \eqref{Qcomplex} yields  the claimed expression \eqref{complexsquare} for the energy.

It follows  immediately that the equality \eqref{complexequality} holds   for all compact  $D\subset \R^2$   iff 
\bee
\label{complexbogo}
 \partial_{\bar z} w  =\frac i 2    \kappa e^{i\alpha} w^2,
\eee
which is the Bogomol'nyi equation \eqref{alphabogon}   in complex coordinates.
With $v$ as defined, this  is equi\-valent to
\bee
 \partial_{\bar z} v  =-\frac{i}{2}\kappa e^{i\alpha},
\eee
whose  general solution is $
v =  -\frac{i}{2}\kappa e^{i\alpha}
\bar{z} + f(z)$, 
where $f$ is an arbitrary holomorphic function, as claimed. Since $f$ takes values in the Riemann sphere, it is allowed to take the value $\infty$.
\end{proof}

One checks that the equation \eqref{holobog} is equivalent to the Bogomol'nyi equation \eqref{alphabogon}, and that it  therefore implies the variational equation \eqref{eomm}.  For later use we note that the energy density of configurations which satisfy the Bogomol'nyi equation is the vorticity plus  $4\pi$ times the integrand of the degree. This sum can be written as 
\bee
\label{endensityw}
\epsilon(x_1,x_2)= 4 \frac{ \partial_z w\partial_{\bar z} \bar w - \partial_{\bar z} w \partial_z \bar{w} +\kappa \text{Im}\left(e^{i\alpha} (\partial_z w -w^2\partial_z\bar w)\right)}{(1+|w|^2)^2}.
\eee 

\subsection{Degree and vorticity of magnetic skyrmions at critical coupling}

Before discussing the topology of the magnetic skyrmions defined by \eqref{msol}, it is worth revisiting the simpler case of the standard $O(3)$ sigma model in the plane, defined by the Dirichlet energy functional \cite{BP,MS}. The requirement of finite energy in that model  leads to the condition that fields tend to a constant at spatial infinity and may  be extended to smooth maps $S^2\rightarrow S^2$.  The Bogomol'nyi equations   are then equivalent to the map being either holomorphic or anti-holomorphic.   Considering the holomorphic case for definiteness,  the energy  is proportional to the degree and for this to be finite, the configuration has to be a rational map, i.e., of the form $p(z)/q(z)$, where $p$ and $q$ are polynomials of degree $m$ and $n$. The topological degree of the map is simply max$(m,n)$ in that case. 

The DM  term, which is a crucial feature of all models of magnetic skyrmions, is not positive definite, and therefore the energy expression for magnetic skyrmions may be finite even for configurations which do not tend to a constant value at spatial infinity.  As a result, even finite energy configurations do not necessarily extend to smooth maps $S^2\rightarrow S^2$ and do not necessarily have a well-defined topological degree.    Moreover, it is not clear {\em a priori } if   solutions of the Bogomol'nyi equation in our models have well-defined total vortex strength and total energy. 

We shall  now illustrate these issues  for our infinite family of solutions \eqref{msol}, and  show that, for rational maps $f$, the combination  $4\pi (Q+\Omega^0) $    of degree and  the regularised total vortex strength  \eqref{omthetareg} nonetheless  always yields a positive integer multiple of $4\pi$.

Before we enter a general discussion, it is illuminating to consider linear examples of the  form
\bee
\label{keyex}
v= -\frac i 2 \kappa e^{i\alpha} (\bar{z} +  A e^{i\chi}z), \qquad A\in\R^{\geq 0}.
 \eee
As we shall see, this family   captures the essence of the problem of defining the degree and the total vortex strength.

The evaluation of the integral defining $Q$  is elementary. Switching to polar  coordinates according to $z=re^{i\varphi}$, we find, after completing the radial integration, 
\bee
Q[w]= \frac{1}{2\pi}
\int_0^{2\pi} \frac{A^2 - 1}{1+ A^2 +2A\cos (2\varphi +\chi)}\d \varphi.
\eee

The evaluation of the total vortex strength is more subtle. The one-form $\Theta$ for the field \eqref{keyex} is 
\bee
\Theta = 4 \frac{ (1+ A\cos (2\varphi + \chi)) \d \varphi +  A\sin(2\varphi +\chi) \d \ln r}{r^{-2} + ( 1 + A^2  +2A \cos (2\varphi +\chi)}.
\eee
When  computing the total vortex strength, we need to integrate  this  form  over a  curve along which $r$ is large.  The leading term is 
\bee
\label{Thetasy}
\Theta \sim 4 \frac{ (1+ A\cos (2\varphi + \chi)) }{  1 + A^2  +2A \cos (2\varphi +\chi)   } \d \varphi +4\frac{A\sin(2\varphi +\chi)}{ 1 + A^2  +2A \cos (2\varphi +\chi)} \d \ln r.
\eee
Clearly, the integral  of the term proportional to $\d \varphi$ gives the same answer for any simple curve enclosing the origin.  However, the integral of the  term proportional to $ A \; \d \ln r$ depends on the curve we choose, even in the limit of large radius.  One can use the  $\varphi$-dependence  to introduce arbitrary contributions by  deforming the contour with an outward bulge starting at some angle $\varphi_1$ and ending at $\varphi_2  > \varphi_1$. 
We conclude that the total vortex strength is not well-defined for configurations defined by \eqref{keyex} when $A\neq 0$.

However,  the  integral of $\Theta$  along a large circle $C^R$ centred at the origin  has a well-defined limit as  the radius  tends to infinity, precisely because  $\d \ln r$  does not contribute along such a circle.  Thus, with the definition \eqref{omthetareg}
\bee
\Omega^0[w] =\frac{1}{2\pi} \int_0^{2\pi}\frac{2+2 A\cos(2\varphi+\chi)}{1 + A^2  +2A \cos (2\varphi +\chi)} \d \varphi. 
\eee
It is immediate that 
\bee
Q[w] + \Omega^\circ[w] =1,
\eee
regardless of the value of $A$. However, the  contribution from the degree and the vortex strength depends crucially on $A$. Since 
\bee
\label{Qformula}
 Q[w]= \frac{1}{2\pi} \int_0^{2\pi} \frac{A^2 - 1}{1+ A^2 + 2 A\cos (2\varphi +\chi)}\d \varphi =\begin{cases}
 \phantom{-}1 \quad & \text{if} \quad A>1  \\
 \phantom{-} 0 \quad & \text{if} \quad A=1 \\
 -1 \quad &\text{if} \quad A<1,
 \end{cases}
\eee
and 
\bee
\label{Omforula}
\Omega^\circ[w]=  \frac{1}{2\pi} \int_0^{2\pi} \frac{2+ 2A\cos(2\varphi+\chi)}{1 + A^2 +2 A\cos (2\varphi +\chi)}\d \varphi =\begin{cases} 
 0 \quad & \text{if} \quad A>1 \\
 1\quad & \text{if} \quad A=1  \\
 2 \quad &\text{if} \quad A< 1,
 \end{cases}
\eee
we see that a configuration dominated by the holomorphic part  ($A>1$) has degree 1 and  vanishing vortex strength. A configuration dominated by the anti-holomorphic part ($A<1$) has degree -1 but vortex strength 2. In the intermediate case $A=1$,  the degree comes out as 0  and the  vortex strength contributes 1. 

The deeper reason behind the `jumping' of the degree of the map defined by \eqref{keyex} lies in the extendibility of the map to  one between spheres. An overall factor is irrelevant for this discussion, so we consider 
\bee
v=\bar z +Ae^{i\chi} z.
\eee
Then $w=1/v$ has a pole when   $re^{-2i\varphi}= -Are^{i\chi }$. This has no solution when $A\neq 1$, but is solved by the entire line 
\bee
\label{linedirection}
\varphi= -\frac \chi 2  \pm \frac{\pi}{2}
\eee
 when $A=1$. In particular, $w$ therefore does not have a good limit for $z\rightarrow \infty$ when $A=1$: the  result is infinity along the direction \eqref{linedirection} but zero otherwise. It therefore  does not extend to a smooth map between spheres in  that case. When $A\neq 1$ one checks, by considering the map in terms of $\zeta =1/z$, that $w$ does extend to a smooth map between spheres. Our integrations confirm this analysis for $A\neq 1$, but also show that the combination of degree and vortex strength gives a stable result even when $A=1$.

Our observations about the examples \eqref{keyex} generalise. In order to formulate this generalisation we define the regularised energy of a solution of the Bogomol'nyi equation as 
\bee
E^\circ[w]= 4\pi (Q[w] + \Omega^0[w]).
\eee

\begin{lemma} If $p$ and $q$ are polynomials in $z$  of degree $m$ and $n$ and without common factor,  the  integral defining the  total energy of the magnetic skyrmion solution determined via
\bee
\label{findeg}
v =  -\frac{i}{2}\kappa e^{i\alpha}\bar{z} + \frac{p(z)}{q(z)}, 
\eee
is well-defined provided $m\neq n+1$, i.e., provided $p/q$ does not grow linearly for large $z$. The total energy is 
\bee
E[w]=4\pi \; \text{max} (m,n+1)\qquad  \text{if} \quad m\neq n+1.
\eee
When $m=n+1$, the total energy is not well-defined but the  regularised total energy  is 
\bee
E^\circ [w]=4\pi \; m \qquad   \text{if} \quad m=n+1.
\eee
\end{lemma}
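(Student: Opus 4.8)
The plan is to reduce everything to the Theorem and then to two boundary/counting computations. Writing $c=-\tfrac{i}{2}\kappa e^{i\alpha}$, so that the solution \eqref{findeg} is $v=c\bar z+f$ with $f=p/q$ and $\partial_{\bar z}v=c$, $\partial_z v=f'$. Since the solution satisfies the Bogomol'nyi equation, the Theorem gives $E_D[w]=4\pi(Q_D[w]+\Omega_D[w])$ for every compact $D$; taking $D$ to be the disc of radius $R$ and letting $R\to\infty$ turns this into $E^\circ[w]=4\pi(Q[w]+\Omega^\circ[w])$, with $\Omega^\circ$ as in \eqref{omthetareg}. So the whole problem becomes the separate evaluation of the degree $Q$ and the regularised vortex strength $\Omega^\circ$. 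First I would note that, because $w\mapsto 1/w=v$ is a holomorphic isometry of the Riemann sphere, the Fubini--Study area form (hence the degree integrand in \eqref{Qcomplex}) is unchanged, so $Q$ may be computed from $v$; its Jacobian density is then $|\partial_z v|^2-|\partial_{\bar z}v|^2=|f'|^2-\tfrac14\kappa^2$, and the local degree at a regular preimage is the sign of this quantity.

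To evaluate $Q$ I would avoid the area integral and compute it as a signed count of preimages. Choose a regular value $v_0\in\C$, finite and different from the common limit of $v$ at spatial infinity; this limit exists precisely because $m\neq n+1$ forces $v\to\infty$, hence $w\to 0$, uniformly in direction, so all preimages of $v_0$ lie in the finite plane and $w$ extends to a map $S^2\to S^2$. The preimages are the zeros of $\psi(z)=q(z)\big(v(z)-v_0\big)=c\,\bar z\,q(z)+\big(p(z)-v_0 q(z)\big)$, and since $p,q$ have no common factor $\psi$ is nonzero at the zeros of $q$, so multiplying by the non-vanishing holomorphic factor $q$ does not alter any local degree. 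By the argument principle $Q$ is therefore the winding number of $\psi$ along a large circle. The holomorphic part $p-v_0q$ has degree $\max(m,n)$ and winding $\max(m,n)$, whereas $c\,\bar z\,q$ has degree $n+1$ and winding $n-1$ (the factor $\bar z$ contributes $-1$ and $q$ contributes $+n$). Comparing growths then gives $Q=m$ when $m>n+1$ and $Q=n-1$ when $m<n+1$.

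For the vortex strength I would use $\Omega^\circ=\tfrac{1}{4\pi}\lim_{R\to\infty}\int_{C^R}\Theta$ with $\Theta$ in the form \eqref{Thetaalpha}, i.e. $\Theta\sim 2\kappa\,\mathrm{Re}\big(e^{-i\alpha}\,\mathrm{d}z/\bar v\big)$ for large $|v|$, and insert the leading asymptotics of $v$. When $m>n+1$ the holomorphic term dominates, $\bar v\sim\mathrm{const}\cdot\bar z^{\,m-n}$, and the integrand is $O(R^{1-(m-n)})\to 0$, so $\Omega^\circ=0$; when $m<n+1$ the anti-holomorphic term dominates, $\bar v\sim\bar c z$, and using $e^{-i\alpha}/\bar c=-2i/\kappa$ the form reduces on $C^R$ to $4\,\mathrm{d}\varphi$, giving $\Omega^\circ=2$. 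In both regimes $Q+\Omega^\circ=\max(m,n+1)$. Moreover for $m\neq n+1$ the subleading terms decay fast enough that there is no $\mathrm{d}\ln r$ contribution to $\Theta$, so $\Omega_D$ (and hence the energy integral $E_D$) converges independently of the exhaustion; the total energy is then genuinely well-defined and equals $4\pi\max(m,n+1)$.

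The delicate borderline $m=n+1$ is where I expect the main obstacle, and it should be handled exactly as the linear family \eqref{keyex}. Now the holomorphic and anti-holomorphic leading terms have the same growth, so on $C^R$ the asymptotic factor in $\psi$ becomes $e^{i(n-1)\varphi}\big(c\,q_n+p_{n+1}e^{2i\varphi}\big)$, a circle of radius $|p_{n+1}|$ about $c q_n$ traversed twice; its winding is $2$ or $0$ according as $|p_{n+1}|$ exceeds or is exceeded by $|c\,q_n|$, so $Q=m$ or $Q=n-1$. Simultaneously the leading asymptotics of $\Theta$ now retains an $O(1)$ term proportional to $\mathrm{d}\ln r$, which makes $\int\Theta$ depend on the chosen contour: the genuine total vortex strength, and hence the energy integral, fails to be well-defined, while the circle regularisation survives and yields $\Omega^\circ=0$ or $2$ respectively, exactly as in \eqref{Omforula}. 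In either sub-case (and, by continuity through the integral formula, at the degenerate value $|p_{n+1}|=|c\,q_n|$ where a line defect of the type \eqref{linedirection} appears and $Q$ itself jumps) the sum is $Q+\Omega^\circ=m$, so $E^\circ[w]=4\pi m$, which completes the proof.
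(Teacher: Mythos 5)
Your proposal is correct and arrives at the paper's conclusions in all three regimes, but it takes a genuinely different route for the central computation. The paper never evaluates $Q$ and $\Omega$ separately in the general case: it writes the degree integrand as an exact form, peels off the singular part $i(\d\ln q-\d\ln\bar q)$ whose small-circle contributions at the poles of $v$ give $4\pi n$, and then folds the remaining boundary term together with $\Theta$ from \eqref{Thetaalpha} into a single one-form $\beta$ whose asymptotics on $\partial D$ are analysed case by case. You instead compute $Q$ topologically, as the winding number of the rational harmonic function $q\cdot(v-v_0)$ via the argument principle (getting $Q=m$ or $Q=n-1$), and compute $\Omega^\circ$ separately from the large-$|v|$ asymptotics of $\Theta$ (getting $0$ or $2$). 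Your split makes the topological content more transparent --- it connects directly to the zero-counting results of Khavinson--Neumann cited later in the paper, and it cleanly isolates the contour-independent quantity ($Q$) from the contour-dependent one ($\Omega$), which is exactly where the $m=n+1$ pathology lives. The price is two extra justifications the paper's Stokes argument does not need: that the area integral \eqref{Qdef} equals the signed preimage count (i.e.\ that $w$ extends to $S^2\to S^2$ and that a regular value with finitely many preimages exists), and a separate verification of absolute convergence of the energy integral for $m\neq n+1$, which you only gesture at via the absence of a $\d\ln r$ term. The one genuinely soft spot is the degenerate sub-case $m=n+1$ with $|p_{n+1}|=|c\,q_n|$ (the line defect), where $w$ does not extend to the sphere and $Q$ jumps; your appeal to continuity of $Q+\Omega^\circ$ through the integral formulas is plausible and is borne out by the paper's explicit evaluation of \eqref{Qformula} and \eqref{Omforula} for the linear family, but in your framework it would deserve a direct check, whereas the paper's combined one-form $\beta$ gives $\lim_{R\to\infty}\int_{C^R}\beta=4\pi$ uniformly in that family without ever needing to know how the sum distributes between $Q$ and $\Omega^\circ$.
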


\begin{proof}
It is clear that 
\bee
\label{ratf}
f(z)= \frac{p(z)}{q(z)} 
\eee 
is a  holomorphic map to the Riemann sphere, so \eqref{findeg} defines a magnetic skyrmion satisfying the Bogomol'nyi equation. 
Written in terms of $v$, the energy density  \eqref{endensityw}
is 
\bee
\label{endensity}
\epsilon(x_1,x_2)= 4 \frac{ \partial_z v\partial_{\bar z} \bar v - \partial_{\bar z} v \partial_z \bar{v} +\kappa \text{Im}\left(e^{i\alpha} (\partial_z \bar{v} -\bar {v}^2\partial_z v)\right)}{(1+|v|^2)^2}.
\eee 
This expression shows in particular that the energy density is smooth at the poles of $w$ (zeros of $v$), so that any divergence  in the energy integral must come from the behaviour at infinity. We first show that  there is no divergence when $m\neq n+1$.

For solutions of the form  \eqref{findeg} and $m-n> 1$, the leading  term in the energy density for large $r$  comes from 
\bee
4 \kappa\frac{  \text{Im}\left(-e^{i\alpha} \bar {v}^2\partial_z v\right)}{(1+|v|^2)^2},
\eee 
leading to the asymptotic formula  
\bee
|\epsilon(r,\varphi)|=  C r^{-(m-n+1)} + \mathcal{O} \left(r^{-(m-n+2)}\right), \quad \text{for some} \; C\in \R.
\eee
This is integrable with respect to the integration measure $r  \d r \, \d \varphi $  for $m-n> 1$. 

When $m -n < 1$, the leading large-$r$ behaviour in the energy density is determined by the linear anti-holomorphic term in $v$, so the  energy density behaves asymptotically as 
\bee
|\epsilon(r,\varphi)|=  C r^{-4} + \mathcal{O} \left(r^{-5}\right), \quad \text{for some} \; C\in \R.
\eee
This is again integrable with respect to the integration measure $r \d r \, \d \varphi $. 

Next we turn to the evaluation of the energy integral.  Our calculations will also show  that, for $m=n+1$, the integral  requires regularisation.  Our strategy is as follows. For any solution of the Bogomol'nyi equation we have $ E_D[w]=4\pi (Q_D[w] +\Omega_D[w])$ for any compact subset $D\subset \R^2$. 
In order to evaluate the energy integral, we  turn the integral defining the degree into boundary integrals and  evaluate them together with the boundary integral defining the total vortex strength. In the cases where the total energy is well-defined, we will find that the boundary contribution from infinity is independent of the choice of contour  at infinity.  In the case where $p/q$  grows linearly at infinity,  we evaluate  the boundary contribution on a circle at infinity,  leading to our  result for the regularised energy.

We recall  that the  expression \eqref{Qdef} for the degree  is the integral of the pull-back of the area form on $S^2$,  and that it can be written in terms of $w$  and  $v$ as 
\bee
4\pi Q[w] =2i\int_{\R^2} \frac{\d w\wedge \d \bar{w} }{(1+|w|^2)^2} =2i\int_{\R^2} \frac{\d v\wedge \d \bar{v} }{(1+|v|^2)^2}.
\eee
Moreover, the integrand can be written as an exact form
\bee
\label{degprim}
 2i\frac{\d v\wedge \d \bar{v} }{(1+|v|^2)^2}  =\d\left( i\frac{v\d\bar{v} - \bar{v} \d v}{1+|v|^2} \right) ,
\eee
but the one-form  in brackets is singular at the poles of $v$.  We can make these singularities explicit as follows.

With $f$ of the given form, we note 
\bee
v=\frac{g}{q},  \quad \text{with} \quad  g= -\frac{i}{2}\kappa e^{i\alpha}\bar z q(z) +p(z).
\eee
 For any $v$ of this form,
one checks that 
\bee
i\frac{v \d\bar{v} - \bar{v} \d v}{1+|v|^2} = i\frac{g \d \bar g -\bar g \d g + q \d\bar q -\bar q\d q}{|g|^2 +|q|^2} +i( \d\ln q-\d\ln\bar q).
\eee
The first term  on the right hand side is manifestly smooth, but  $
i( \d\ln q-\d\ln \bar q) $
is singular at the zeros of $q$. Thus, picking a compact region $D\subset \R^2$ which contains open neighbourhoods of the zeros of $q$, and denoting negatively oriented circles of radius $\epsilon$ around each of the zeros of $q$ (possibly repeated) by $C_j^\epsilon$, $j=1,\ldots,n$,  we can write 
\begin{align}
2i\int_{D} \frac{\d v\wedge \d \bar{v} }{(1+|v|^2)^2} &  = i \lim_{\epsilon\rightarrow 0} \sum_{i=1}^n\int_{C^\epsilon_j}(\d \ln q-\d\ln \bar q)
+ i \int_{\partial D} \frac{v\d\bar{v} - \bar{v}  \d v}{1+|v|^2} \nonumber \\
&= 4\pi n +  i \int_{\partial D} \frac{v\d\bar{v} - \bar{v} \d v}{1+|v|^2}.
\end{align}
Therefore, the degree and  the total vortex strength associated with the region $D$  can be combined into
\bee
\label{boundaryint}
4\pi (Q_D[w] + \Omega_D[w])= 4\pi n+  \int_{\partial D} \beta,
\eee
where we introduced the one-form
\bee
\beta = \frac{iv\d\bar{v} - i \bar{v} \d v+ \kappa e^{-i\alpha} v \d z + \kappa  e^{i\alpha} \bar{v} \d \bar{z} }{1+|v|^2},
\eee
which combines the one-form  whose exterior derivative is the degree density \eqref{degprim} with  the form $\Theta$ 
\eqref{Thetaalpha} used in the definition of the total vortex strength. 

Now, for solutions of the Bogomol'nyi equation,
\bee
\d v = -\frac{i}{2}\kappa e^{i\alpha} \d\bar{z} + \d f.
\eee
It follows that 
\bee
\beta = \frac{iv\d\bar{f} - i \bar{v} \d f+ \frac{\kappa}{2}e^{-i\alpha} v \d z + \frac{\kappa}{2} e^{i\alpha} \bar{v} \d\bar{z} }{1+|v|^2}. 
\eee
In order to evaluate the  integral  in \eqref{boundaryint} and its  limit,  we distinguish cases.

\noindent  (i) \;  $m > n+1$.\; In this case the leading term in $v$ for large $r$ is $az^{m-n}$ for some complex number $a$, and  the leading term for $f$ is also $az^{m-n}$. Inserting these, we find
\bee
 \beta \sim i(m-n)  (\d\ln \bar{z} -\d\ln z) = 2(m-n)\d \varphi.
\eee
The integral  of  the asymptotic form of $\beta $ around any simple curve   enclosing the origin is $4\pi(m-n)$, and we conclude 
\bee
4\pi (Q[w] + \Omega[w]) = 4\pi (n  +  (m-n)) = 4\pi m \quad \text{if} \;\;  m>n+1.
\eee
\noindent  (ii) \;  $m = n+1$.\; In this case we write the  leading term in $f$ as $-\frac{i}{2}\kappa e^{i\alpha}  Ae^{i\chi}  z$ for some complex number $Ae^{i\chi}$, and so that the  leading terms  in $v$ for large $r$ are 
 $-\frac{i}{2}\kappa e^{i\alpha}( \bar z + Ae^{i\chi} z)$.  Then one checks, using essentially the calculation leading to \eqref{Thetasy},  that the leading terms in  $\beta$  are 
 \bee
\label{betasy}
\beta \sim  2 \d \varphi +4\frac{A\sin(2\varphi +\chi)}{ 1 + A^2  +2A \cos (2\varphi +\chi)} \d \ln r.
\eee
 As already discussed in the paragraph following \eqref{Thetasy}, the presence of the term proportional to $ \d \ln r$ means that, for $A\neq 0$, the integral of $\beta$ cannot be given a meaning independently of the curve, even in the limit of large radius. However,  regularising by insisting on circular integration paths we observe
 \bee
 \label{mn1}
 \lim_{R\rightarrow \infty} \int_{C^R}\beta = 4\pi,
\eee
and hence 
\bee
4\pi (Q[w] + \Omega^\circ [w])  = 4\pi (n+1)  \quad \text{if} \; \; m=n+1.
\eee

\noindent  (iii) \;  $m < n+1$.\; Now the leading term in $v$ is $-\frac{i}{2}\kappa e^{i\alpha}\bar z$ and the  holomorphic term  is subleading  for large $r$. The integration of $\beta$ in the large $r$ limit is therefore a special case of the calculation in (ii), obtained by setting $A=0$.  This eliminates the term proportional to $\d \ln r$  in \eqref{betasy} and produces a limit independent of the chosen curve. We obtain 
 \bee
4\pi (Q[w] + \Omega [w])  = 4\pi (n+1)  \quad \text{if} \;\;  m<n+1,
\eee
which completes the proof.
\end{proof}

For the remainder of the paper  we focus on rational solutions of the form \eqref{findeg} and  set
\bee
\label{Ndef}
N= \text{max}(m,n+1).
\eee
A simple counting argument shows that  there is a $4N$ (real-)dimensional moduli space of rational maps of the form \eqref{ratf}. For $N=1$  (regularised energy $4\pi$), the 4-dimensional family of solutions is conveniently written as 
\bee
\label{N1fam}
v_1(z)=-\frac{i}{2}\kappa e^{i\alpha}\left(\bar z + az\right)+b \quad a,b \in \C. 
\eee
This includes the family \eqref{keyex} discussed in detail in the previous section  for $b=0$.
For $N=2$ (energy  or regularised energy $8\pi$), the 8-dimensional family of solutions can be written as 
\bee
\label{N2fam}
v_2(z)=-\frac{i}{2}\kappa e^{i\alpha}\bar z + \frac{az^2 +bz+c}{dz+e} , \quad a,b,c,d,e \in \C,  (a,b,c,d,e)\sim \lambda(a,b,c,d,e), \; \lambda \in \C^*,
\eee
where the equivalence relation removes the  redundant simultaneous rescaling of  numerator and denominator by the same non-zero complex number, and we need to require that (i)  $a$ and $d$ do not vanish simultaneously, (ii) $d$ and $e$ do not vanish simultaneously and (iii) the resultant of $(a,b,c,d,e)$ is non-vanishing to ensure that numerator and denominator do no have common factors \cite{MS}.
We will  discuss the form and energy distribution of some of these solutions in the next section.

\section{Solutions and their energy density}

The results obtained thus far add up to a simple recipe for constructing magnetisation fields $\bn$ which solve the Bogomol'nyi equation (and hence the variational equation \eqref{eomm}) out of two complex polynomials $p$ and $q$. Inserting these polynomials into the expression \eqref{findeg} for the complex field $v$,    and combining \eqref{vdef} and \eqref{invstereo} to write the magnetisation  $\bn$ as 
\bee
\label{nv}
n_1+in_2= \frac{2\bar{v} }{|v|^2+1}, \qquad n_3 = \frac{|v|^2-1}{|v|^2+1},
\eee
we obtain solutions of the Bogomol'nyi equation \eqref{alphabogon}.

The simplest solutions are obtained when the holomorphic contribution to $v$  vanishes, i.e., when $f=0$. We use them to illustrate the translation from our coordinates into the ones conventionally used in the discussion of magnetic skyrmions in the literature.  Translating   $v=-\frac{i}{2}\kappa e^{i\alpha}\bar  z$  into the magnetisation field via \eqref{nv}, and comparing with the hedgehog parametrisation \eqref{nangle},   we deduce
\bee
\label{hedgehog}
\theta = 2\tan^{-1}\left(\frac{2}{\kappa r}\right).
\eee
This yields the  Bloch  skyrmion for $\gamma=\frac{\pi}{2}$ (so $\alpha =0$) and the N\'eel skyrmion for $\gamma = 0$ (so $\alpha=\frac{\pi}{2}$)  in their standard form \cite{NT}, but with a particularly simple profile function interpolating between $\theta =\pi$ at  $r=0$ and $\theta =0$ at  $r=\infty$. The solutions \eqref{hedgehog} agree with  the  hedgehog solutions \eqref{hedgehoggen}   of the variational equations when $\mu=\kappa$.

For the remainder of this section we  set $ \kappa=1$, and study some example solutions of the Bogomol'nyi equation \eqref{alphabogon} in some detail.  We adopt the convention, widely used in the magnetic skyrmion literature,  to  refer to configurations of negative degree (like the Bloch and N\'eel solutions) as skyrmions, and to the configurations of positive degree as anti-skyrmions.

We have organised our discussion according to the integer $N$ defined in  \eqref{Ndef}, and begin with the $N=1$  family  \eqref{N1fam}. Of the four real parameters in the two  complex numbers $a$ and $b$, three can be understood in terms of the symmetry group of translations and rotations \eqref{rotact}. Rotations leave the basic skyrmion ($a=b=0$) invariant, but translations generate a shift in $b$. For the general  configuration \eqref{N1fam}, rotations  by $\sigma$ act  by mapping 
\bee
-\frac{i}{2}\kappa e^{i\alpha}\left(\bar z + az\right) + b \mapsto  -\frac{i}{2}\kappa e^{i\alpha} \left(\bar z + e^{-2i\sigma} a z\right) + e^{-i\sigma} b,
\label{rott}
\eee
so can be used to adjust the phase of $a$. 

This action  has an  elementary but interesting consequence  when $b=0$.  Since a rotation by  some angle $\sigma$ leaves the anti-holomorphic term invariant but rotates the phase of the linear  holomorphic term by $-2\sigma$,  configurations with $a\neq 0 $ and $b=0$ are mapped to themselves after a rotation by $\pi$.  This clearly generalises to configurations with a homogeneous holomorphic part  proportional to $z^n$, which  are mapped to themselves after a rotation by $2\pi/n$.

The only parameter in the  $N=1$ family \eqref{N1fam} which cannot be adjusted by a symmetry transformation is  the magnitude  $|a|$ of the complex coordinate $a$.  Varying it  leads to the most interesting deformation in  this  family.  We already know that for $a=0$ we have the basic hedgehog skyrmion.  According to our  formula  \eqref{Qformula},  we have degree $ -1$ configurations, i.e. skyrmions, for $|a|<1$ and  degree $1$ configurations, i.e. anti-skyrmions,  for $|a| > 1 $.  The interpolation between the two necessarily involves the  case $|a|= 1 $ where the degree is zero. As  observed in the discussion after \eqref{Qformula},  the stereographic coordinate $w$ has a pole along an entire  line  in this case. The magnetisation takes the value $\bn=(0,0,-1)^t$  and both the  potential $\frac 12 (1-n_3)^2$  and the energy density \eqref{endensity}  are maximal along  this line. We therefore call  $N=1$ configurations with $|a|=1$ line defects.

\begin{figure}[ht]
\centering
\includegraphics[width=0.3\textwidth] {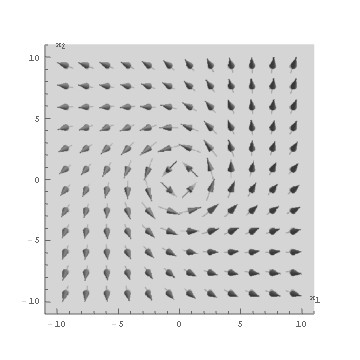} \hspace{1.0cm}
\includegraphics[width=0.3\textwidth]{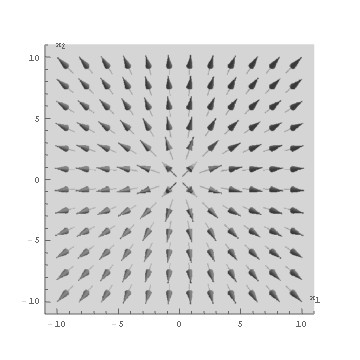} \\
\vspace{-0cm}
 \includegraphics[width=0.3\textwidth]{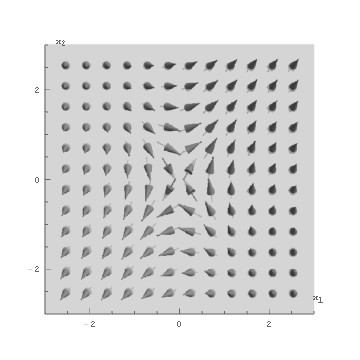} \hspace{1cm}
   \includegraphics[width=0.3\textwidth]{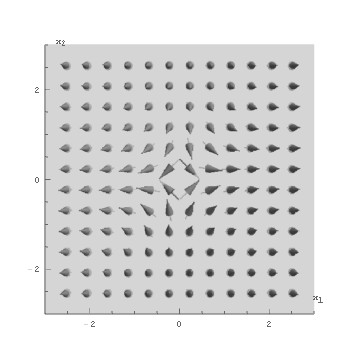}
   \vspace{0.2cm}
    \caption{Top from  left to right: Bloch skyrmion  $v=-\frac{i}{2}\bar{z}$ in the theory with $\alpha=0$ and   N\'eel skyrmion $v=\frac{1}{2}\bar{z}$ in the theory with $\alpha =\frac{\pi}{2}$. Bottom:  anti-skyrmions in the theory with $\alpha=0$, with   $v=-\frac{i}{2}\left(\bar{z}+ 4z\right)$ shown on the  left  and   $v=-\frac{i}{2}\left(\bar{z}+6iz  \right)$ on the right. Note that the  magnetisation of anti-skyrmions rotates oppositely to that of the skyrmions when one traverses a positively oriented circle around the centre, where $v=0$ or $\bn=(0,0,-1)^t$. Note also the different scale of the anti-skyrmion plots.   Size and orientation of anti-skyrmions  can be adjusted via the coefficient of $z$.}
    \label{fig1234}
\end{figure}

To sum up, as $|a|$ increases from zero 
 we deform a  hedgehog skyrmion  through a line defect into an anti-skyrmion. 
 In this process, the axisymmetric   energy distribution of a hedgehog  skyrmion  is  first squeezed and stretched into an elliptical shape and then into a  line at $|a|=1$. As $|a|$ is increased further    the energy distribution contracts again into an elliptical shape and shrinks; it never recovers the axisymmetry of the original skyrmion.  In Fig.~\ref{fig1234} we illustrate our discussion  by showing the magnetisation of the  basic Bloch and N\'eel skyrmions in our model, and  also the magnetisation of two anti-skyrmions  in the theory  with $\alpha =0$.


Next we turn our  attention to the family of solutions \eqref{N2fam} with $N=2$.  We have not fully explored the eight parameters in this family, of which three can again be accounted for by the symmetry operations of translation and rotation.  Here, we only exhibit two interesting phenomena and fix $\alpha=0$ for definiteness. The first is  the nonlinear superposition of skyrmions and anti-skyrmions in this model.  Configurations of the form   
\bee
v=-\frac{i}{2}\left(\bar{z}- \frac{ z^{n}}{R^{n-1}}\right), \qquad  R\in \R^{>0}, \quad n \in \Z^{>1} ,
\eee
have degree $n$,  but the functions $v$  have   $(n+2)$ zeros: one at the origin and $(n+1)$ zeros  at 
\bee
\label{vzero}
z_k=R e^{\frac{2\pi k i}{n+1}} , \quad k=0,\ldots, n.
\eee
The magnetisation takes the value $\bn=(0,0,-1)^t$ at the zeros, 
but inspection of the winding  shows that the configuration consists of a skyrmion  at the origin  surrounded by $(n+1)$ anti-skyrmions  at the locations \eqref{vzero}. The energy density \eqref{endensity} is peaked at the anti-skyrmion locations.   Such superpositions of 
 solitons and anti-solitons do not solve the Bogomol'nyi equations of the pure $O(3)$ sigma model,  which require maps to be either holomorphic or antiholomorphic. In our model they are possible, essentially because the number of zeros for functions of the form \eqref{findeg} can exceed the absolute value of the degree.  In fact,  determining the number of zeros of \eqref{findeg} (and  hence the number  of skyrmions and anti-skyrmions counted without sign) is an interesting  mathematical problem, see   \cite{KN} for  rigorous bounds and also \cite{BHJR,FKK} for further results and the application of this problem to gravitational lensing.

Continuing with $\alpha=0$,  we observe that rational solutions of the form 
\bee
\label{bag} 
v=-\frac{i}{2}\left(\bar{z} -  \frac{R^2}{z}\right), \quad  R\in \R^{>0}, 
\eee
  look like skyrmion bags \cite{FKATDS} or sacks \cite{RK}.  These particular bags  have  degree $Q=0$, take the vacuum value  $\bn=(0,0,1)^t$ at the origin and  the value $\bn=(0,0,-1)^t$ on a circle of radius $R$ centred at the origin. The energy density \eqref{endensity} is maximal on this circle.  The solutions \eqref{bag} are  axisymmetric and an example of what is sometimes called skyrmionium in the literature.  Clearly one can obtain more general bags by acting with translations on \eqref{bag}.  We note that the circles of zeros of functions like  \eqref{bag} also play a special role in the context of gravitational lensing, where they are  called  Einstein rings \cite{FKK}. 

\section{Conclusion}

In this paper   we introduced  models for magnetic skyrmions in the plane for which an infinite family  of analytical solutions can be given explicitly.  In our study 
we concentrated on the family of magnetic skyrmions determined by a rational holomorphic function according to \eqref{findeg}. We showed that, with a suitable regularisation in the case of linear growth in the holomorphic function at infinity,  the total energy 
takes  the  quantised values  $4\pi N$,  where $N$ is a positive integer which combines the degree with the  (possibly regularised) total vorticity of a configuration. This integer does not appear to have been studied in the literature on magnetic skyrmions, but our study suggests that it plays an important role. 

We also   determined  the  collective coordinates  or moduli of the solutions for given $N$, and  studied example configurations  for low values of $N$. They include remarkable deformations of Bloch and N\'eel skyrmions  into  line defects and anti-skyrmions.  Finally, we exhibited some of the  bag  and multi-anti-skyrmion configurations included in the family \eqref{findeg}. 

Magnetic skyrmions are sometimes also called chiral skyrmions because the DM interaction breaks reflection symmetry. This is evident in our solutions through the mandatory and fixed anti-holomorphic part (of  negative degree)  but the  optional holomorphic part (of  positive degree). It is interesting and somewhat unexpected that our models allows for   nonlinear superpositions  of skyrmions and anti-skyrmions in static configurations.  However, their roles are not symmetric. To obtain  perfect mirrors of our models one would need to replace the DM interaction with the one of opposite chirality \eqref{reflectedDM}. One checks that this would lead to a Bogomol'nyi equation which enforces a fixed holomorphic part, but allows for an arbitrary anti-holomorphic part.  This is consistent with the criterion derived  in \cite{genDMI}  for the  preference of   skyrmions over  anti-skyrmions depending on the chirality. 

The explicit family of solutions in our models and their  chiral twins should be studied further in order to obtain a systematic understanding of the types of defects they capture. One would also like to understand how these defects relate  to those observed in the various phases of generic models for magnetic skyrmions as discussed for example in \cite{phasediagram}.  

Mathematically, one  would like to  understand more precisely the class of maps from the plane to the sphere  for which the integrals defining the degree, total energy and total vortex strength  are  well-defined. 
It would also be important to ascertain if  the energy functional \eqref{alphaen1} is bounded below for a suitably defined class of configurations (not just our solutions).  In \cite{Melcher} it was  shown that in a closely related model with a pure Zeeman potential the  total energy is bounded below by a multiple of the  absolute value of the degree.  In our model, the energy is potentially unbounded below unless one imposes suitable  behaviour at spatial infinity.

The second order variational equation \eqref{eomm} and the Bogomol'nyi equation \eqref{holobog}  deserve further study.  One would like to know, for example, if there are other finite-energy solutions (possibly after suitable regularisation)  of the variational equation, not included in  our rational  family \eqref{findeg}.

Finally, future work  should    include the study of time evolution  and the effect of external fields 
on the solutions in  our model.

\vspace{0.5cm}

\noindent {\bf Acknowledgements} \, B B-S and CR acknowledge EPSRC-funded PhD studentships. BJS thanks Christof Melcher for correspondence and for pointing out reference \cite{DM}, and  Paul Sutcliffe for  helpful  comments regarding the total energy of line defects in our model.

\vspace{0.5cm}

\noindent {\bf Note added}  \, While this paper was under review,   
research was reported in the literature which has implications for the definition of energy in our models.   A modification of the energy expression for magnetic skyrmions by a boundary term already proposed for analytical reasons in  \cite{Melcher}  was generalised  in \cite{Schroers2} in the framework of gauged sigma models. Applied to the models discussed here, this modification amounts to subtracting  the total vorticity $\Omega_{\R^2}$ \eqref{EQOmD} from our energy \eqref{alphaen1} or, equivalently,  to  replacing  $\bn \cdot \nabla_{-\alpha}  \times \bn$ by  $(\bn-\be_3) \cdot \nabla_{-\alpha}  \times \bn$ in \eqref{alphaen1}.  The modification does not affect the variational equations or the Bogomol'nyi equations studied here.  The modified energy has the advantage of being  finite  without regularisation for all rational solutions  of the form \eqref{findeg}.  In fact,  for any such solution the modified energy is  equal to $4\pi Q$, where $Q$ is the degree, instead of the value $4\pi N$ for our energy (after regularisation if required).  On the  other hand, the   geometrical investigation reported in  \cite{Walton} shows  that  the  integer $N$ \eqref{Ndef} can  be interpreted as the equivariant degree of the rational solutions \eqref{findeg}, suggesting that the energy we used   here  is geometrically  natural.  It thus appears that there is a certain tension between the  energy expressions preferred from an analytical and a geometrical point of view.  Further work is required to resolve this.

\end{document}